\documentclass[aps,pra,reprint,10pt,a4paper]{revtex4-1}
\usepackage[utf8]{inputenc}
\usepackage[sc,osf]{mathpazo}
\usepackage[T1]{fontenc}
\usepackage{amsfonts}
\usepackage{amssymb}
\usepackage{amsmath}
\usepackage{amsthm}
\usepackage{graphics}
\usepackage{graphicx}
\usepackage{braket}
\usepackage[colorlinks=true,linkcolor=blue,
urlcolor=blue,citecolor=blue]{hyperref}
\newtheorem{theorem}{Theorem}
\newtheorem{corollary}{Corollary}

\begin{document}

\title{
Unifying Two Notions of Nonlocality in Quantum Theory
}
\author{Shiladitya Mal, Aditi Sen(De)}

 \affiliation{Quantum Information and Computation Group,
Harish-Chandra Research Institute, HBNI, Chhatnag Road, Jhunsi, 
Allahabad 211 019, India}

\begin{abstract}

Ensembles containing orthogonal product states are found to be indistinguishable under local operations and classical communication (LOCC), thereby showing irreversibility in the preparation and distinguishing processes, which is commonly known as nonlocality without entanglement. On the other hand, correlations arising from incompatible measurements on entangled states lead to Bell-nonlocality. We unify these two concepts from the change  in certain property incurred in the ensemble under a suitable global unitary transformation. Specifically, we prove that under controlled-NOT (CNOT) operation, a full product basis can create entangled states if and only if the full bases or any subspace of it become irreducible in the process of LOCC discrimination. The proposed criteria quantifies the amount of nonlocality associated with the sets of product states which are even incomplete. For a set having entangled states,  we modify the quantity accordingly and  show that it can provide an explanation for the phenomena of more nonlocality with less entanglement.

\end{abstract}

\maketitle

\section{Introduction}
\label{sec_intro}

Nonlocality associated with quantum theory had been much debated and investigated from the time of its inception \cite{speakablebook}. Einstein, Podolski and Rosen (EPR) pointed out that certain type of quantum correlation \cite{HHHHRMP}, emerging from incompatible local measurements on composite quantum systems, violates the 
assumption of locality and reality \cite{epr} and hence it was argued that the quantum theory is not  complete  in the sense that it misses some variable to demonstrate a grand narrative of nature \cite{epr}. In response to the EPR argument, Bohm proposed an ontological model of quantum theory which restores reality but not locality \cite{bohm}. Subsequently, in 1964, Bell proved that any theory compatible with quantum theory cannot satisfy the above mentioned two assumptions simultaneously \cite{Bell, Bellexpt}. 
In spite of  Bell's comment, "correlations cry out  for explanation", the
second revolution of quantum technologies are heavily based on 
quantum nonlocality which leads to device-independent quantum information processing \cite{Barrett'05, Acin'06, Acin'07, Bellrev}.

On the other hand, in 1999, Bennett \emph{et al.}  provided  a set of product states, referred as
nonlocality without entanglement (NLWE) which cannot be distinguished by local operation and classical communication (LOCC) \cite{nlwe}, thereby adding more cause to the Bell's quotation.
In the context of distinguishability, it was shown that global operations can give more information  compared to  LOCC  \cite{PeresWootters}. On the other hand,  a set of orthogonal states can always be distinguished  by global measurements while nonorthogonal states cannot. Instead of global operations, if one allows only a restricted class of operations like LOCC or separable operations,  it is expected that  there will be  even a set of  orthogonal quantum states, containing especially entangled states  which cannot be discriminated by LOCC. Surprisingly, going against the intuition, exhibition of the new type of nonlocality does not require entanglement in their preparation nor in joint measurement which distinguishes them \cite{nlwe, ChenLi04}. Moreover, it was shown that any two orthogonal quantum states are always distinguishable by LOCC, independent of the content of entanglement \cite{WHSV}.

 In this way, two types of nonlocality,  studied independently, emerge in the literature. As pointed out by Niset and Cerf \cite{NisetCerf},  entanglement does not imply Bell-type nonlocality \cite{Werner} while a new type of nonlocality in terms of LOCC indistinguishability does not guarantee entanglement in shared states \cite{nlwe} -- we refer Bell-nonlocality as type-I while Bennett-nonlocality as type-II interchangeably. Moreover,  a toy model was proposed where several phenomena featuring quantum theory including NLWE was demonstrated although  the theory does not exhibit Bell nonlocality, thereby again separating the two concepts \cite{Spekkens'07}. Type-II nonlocality  initiates a plethora of  results after its introduction -- a set of unextendible product basis (UPB) \cite{UPB1, UPB2}  are found to be LOCC indistinguishable and also  provides a methodical construction of bound entangled states,  several full and incomplete LOCC indistinguishable product ensembles are proven \cite{groisman, Rinaldis, Nathanson, others}. On the other hand, it was also shown that there exist sets of entangled states or sets of entangled as well as product states that cannot be distinguished by LOCC \cite{WHSV, Ghosh'01, WalgateHardy, morenlwe, Somdamnlp, maxentdxd}. 
 Another interesting twist came into the picture with the concept of more nonlocality with less entanglement \cite{morenlwe} --  it was shown that although three maximally entangled states can be distinguished by LOCC while when an entangled state is replaced by a product state, thereby reducing the average entanglement in the ensemble, the ensemble becomes LOCC-indistinguishable. Phenomena of local-indistinguishability find applications in the area like  quantum secret sharing and data hiding technology which are primitives of cryptographic protocols \cite{datahide}, the capacities of classical information transfer via quantum states \cite{DCamader, DCnoise}. 
 
All these counter-intuitive results make characterizing  LOCC-indistinguishable sets of states to be highly non-trivial. Although there are several works in  which  specific ensembles are extensively studied  from the perspective of distinguishability by LOCC, investigations towards  the quantification of such ensembles are limited \cite{locallyaccess1, locallyaccess,  measureQCset, locallyaccesslow}. For this purpose, like Holevo bound \cite{Holevo} for accessible information with global measurements, a universal upper bound, known as local Holevo bound, \cite{locallyaccess1, locallyaccess} based on aquiring  information via LOCC was found \cite{locallyaccess1, locallyaccess}. Interestingly, for the locally indistinguishable product ensembles, the local Holevo bound cannot provide a better bound than the Holevo bound with global operations. In  a similar spirit of the lower bound obtained in Ref. \cite{Josza94} for global actions, the lower bound on locally accessible information has also been obtained when one of the parties is two dimensional \cite{locallyaccesslow}. 

In the present work,  we connect these two notions of nonlocality which were inequivalent according to the conventional wisdom, thereby leading to a characterization of LOCC-indistinguishable sets. 
In particular, these sets can be classified into different categories --  (1) sets  that can be distinguished by one round of LOCC, independent of the parties who start the protocol, (2) an ensemble which requires a specific party to start with, (3) sets of states which require two or more rounds of LOCC and finally (4) sets of states which are indistinguishable via infinite rounds of LOCC. Even among the distinguishable  states, sets mentioned in  (1), (2) and (3) surely possess different  properties which are  not  well understood. Here we investigate the effect of some specific transformations on the whole ensemble and find that it has remarkable implications in characterizing these sets.
Considering full product bases, we show that LOCC-indistinguishable sets always generate entanglement under such transformation which is taken to be a global entangling operation, i.e., controlled-NOT (CNOT) operation, having potential to demarcate the above four categories as well. Hence, we reveal a significant property that these indistinguishable sets of product states have "potential to be entangled". Since the generated state is also pure, its entanglement content can be uniquely  quantified by the von-Neumann entropy of the reduced subsystem \cite{Bennett96} and we call it as the "nonlocal entropy" associated with the initial product bases.

Specifically, we prove that for the full product basis of arbitrary dimensional bipartite system, nonlocal entropy is positive if and only if the bipartite product states  either form irreducible basis or become irreducible \cite{Rinaldis} in any rounds of the LOCC protocol. Moreover,  we show that pure entangled states  generated from LOCC-indistinguishable sets of product states always exhibit Bell-nonlocality \cite{gisin'91} while LOCC-distinguishable sets may not. From this idea, we can provide a general method for quantifying nonlocality associated with an arbitrary set of product states with respect to a given protocol to probe local distinguishability.
 The quantification can also be interpreted from a thermodynamic perspective, i.e., by linking  nonlocal entropy with a reduction in the amount of work which can be extracted from  subsystems locally \cite{wd}. 
 We then generalize this idea to incomplete product bases and sets containing entangled states \cite{UPB1, UPB2, WalgateHardy, morenlwe, Somdamnlp, maxentdxd}.
If the set of states contains atleast one entangled state, we propose, reduction of local entropy from the average ensemble state after application of CNOT, as the pertinent figure of merit which can successfully quantify the LOCC-indistinguishability. Most significantly, this novel treatment uncovers a physical insight for the phenomena of more nonlocality with less entanglement \cite{morenlwe}. Finally, we show how to obtain upper and lower bounds on locally accessible information  based on the proposed quantifier.

\section{Characterising irreducible product bases via nonlocal entropy}
\label{sec:avent}

 In this section, we characterize local distinguishability of sets consisting of only product states -- full product bases (FPB) as well as  incomplete product bases and show how Bell-nonlocality emerges from Bennett-nonlocality. Incomplete product bases are called completable if it can be made complete only by product states  and are uncompletable  if its complementary subspace contains fewer  number of product states than its dimension. UPB is a prominent example of the last one.   
Let us first define a physical quantity, dubbed here as nonlocal entropy, which not only connects Bell-type nonlocality to Bennett-type nonlocality but also captures the amount of nonlocality present in the product orthogonal ensembles, thereby quantifies it.  Moreover, we will show that the  measure introduced here has potential to disclose the hierarchies in the  degree of nonlocality present in these ensembles. 

To set the stage, suppose the ensemble, \(\{p_i, |\psi_i\rangle^{AB} = |\psi_i^A\rangle \otimes |\psi_i^B\rangle\}\}_{i=1}^{k}\), consisting of \(k\) states which are produced locally by two distantly situated parties, \(A\) and \(B\) with probability $p_i$. 
Since the states are orthogonal,  they can always be distinguished  by global operations. Moreover, individual  states in the ensemble are product, and hence they do not  show any nonclassical correlation, useful for quantum information processing tasks \cite{HHHHRMP}. However, it is highly nontrivial that there exists ensembles of product states which are locally indistinguishable \cite{nlwe, UPB1, UPB2}. 

Let us now apply a transformation which is taken here as a global unitary on the set. We define nonlocal entropy in the following way:
\begin{eqnarray}
 \delta_S^{\rightarrow} & = & \max_{U_{AB}} [\sum_i p_i  \min_{U_A} (S(\rho^{'A}_i) - S(\rho^A_i))] \nonumber\\
&\equiv &  \max_{U_{AB}} \sum_i p_i  \min_{U_A} E(|\psi'_i\rangle^{AB});\\
 &=&  \max_{U_{AB}} [ \sum_i p_i \min_{U_A} ( W_{i_{A}}^{in} - W_{i_{A}}^{fin})],
\label{eq_prodright}
\end{eqnarray}
where \(S(\sigma) = - \mbox{tr} \, \sigma \log_2 \sigma \) denotes the von-Neumann entropy which vanishes for pure states, \(\rho_i^A\) is the local density matrix of the individual states in the ensemble and \(\rho^{'A}_i(=tr_B[ U_{AB}|\psi^{AB}_i\rangle \langle \psi^{AB}_i| U_{AB}^\dagger ]\)) is that of transformed ensemble. $U_{AB}, U_A$ are global and local unitaries respectively and \(E\) is an entanglement measure \cite{Bennett96}. 
For pure states, since \(S(\rho^{'A})\) is also the entanglement of the output state \cite{Bennett96}, we can interpret  \(\delta_S^{\rightarrow}\) as the capability of entanglement production of a given ensemble. Therefore, it captures the potential of the initial ensemble to be entangled and we call it as nonlocal entropy to differentiate from local entropy of the initial ensemble, which is always zero for product states. 

Moreover, \(W_{i_{A}}^{in} = \log_2 d^A - S(\rho_i^A) \)  is the maximal possible extractable work from side \(A\)  from  each states of the ensemble with \(d^A\) being the dimension of \(A\) while \( W_{i_{A}}^{fin} = \log_2 d^A - S(\rho^{'A}_i) \) represents the same for resulting ensembles.
Hence, nonlocal entropy can also be interpreted as the maximal  reduction of  work extractable in local subsystems of each state in the ensemble,
due to the application of global unitary operations.
The superscript arrow (\(\rightarrow\)) represents the directionality of the transformation. Specifically, when CNOT is taken as a global transformation, \(\rightarrow\) indicates that \(A\) acts as control and \(B\) as target which we will then show to be connected  with the irreducibility of sets.
Similarly, we can have \( \delta_S^{\leftarrow} \), in which role of  \(A\) and \(B\) gets reversed and 
  consequently,  we define a symmetric  nonlocal entropy as
\begin{eqnarray}
\delta_S = \frac{1}{2}(\delta_S^{\rightarrow}  + \delta_S^{\leftarrow}).
\label{eq_prodsymm}
\end{eqnarray}

In next subsection we begin with some examples to show how nonlocal entropy can characterize different sets of FPB to motivate our main result described subsequently.

\subsection{Illustration with two qubits} 

In this subsection, we consider 
product ensembles of two qubits. We know that for two-qubits, there is no example of nonlocality without entanglement \cite{nlwe} as well as UPB \cite{UPB1, UPB2}. We, however, show that the hierarchies present even in two-qubit product states can be well demonstrated by nonlocal entropy.

\emph{Case 1.} Let us first consider the ensemble \(\mathcal{E}^1 = \{|00\rangle, |01\rangle, |10\rangle, |11\rangle \}\). Without loss of generality, we assume that the states are given with equal probability.   A possible example of a global unitary operation can be the controlled-NOT (CNOT) operation. Moreover, we know \cite{Barenco95, Zanardi00, FV04} (cf. \cite{Chitambar}) that any two-qubit unitary operations can be realized by a few single-qubit unitary operators and three CNOT gates upto a global phase. Since entanglement remains invariant under local unitary  operations,  CNOT gates are enough for computing \(\delta_S\). 
After CNOT operation, the   ensemble remains invariant and is independent of the sites which act as  control and target.  Hence, \(\delta_S\) vanishes for this ensemble. 

\emph{Case 2.}  We now consider another ensemble  \(\mathcal{E}^2 = \{|0+\rangle, |0-\rangle, |10\rangle, |11\rangle \}\), where \(|\pm \rangle = \frac{1}{\sqrt{2}}( |0\rangle + |1\rangle )\). To distinguish states of this ensemble by means of LOCC, \(A\) should start the protocol \cite{groisman}.  If \(B\) starts the protocol, \(\mathcal{E}^2\) cannot be distinguished by LOCC deterministically. If \(A\) acts as a control qubit,  it is easy to check that \(\mathcal{E}^2\) remains invariant under CNOT operation and hence \(\delta_S^{\rightarrow}\) vanishes. On the other hand, if CNOT operation is performed with \(B\) being a control qubit, \(\mathcal{E}^2 \rightarrow \{\frac{1}{\sqrt{2}}(|00\rangle \pm |11\rangle), |10\rangle, |01\rangle \}\), and hence \(\delta_S^{\leftarrow} = \frac{1}{2}\) which leads to \(\delta_S = 0.25\).   

Let us now move to a general product bases consisting of four arbitrary product states \cite{WalgateHardy}  in \(\mathcal{C}^2 \otimes \mathcal{C}^2\). The proof will be based on the notion of irreducibility of  the  product basis \cite{Rinaldis} which is connected to lcoal distinguishability of FPB.  In \(\mathcal{C}^{d_1} \otimes \mathcal{C}^{d_2}\), an FPB is said to be irreducible if it cannot be divided into two blocks of states which lie in  \(\mathcal{C'}^{d_1} \otimes \mathcal{C}^{d_2}\) and in the complementary subspace, \(\mathcal{C'}^{d_1}_{\perp} \otimes \mathcal{C}^{d_2}\) or into \(\mathcal{C}^{d_1} \otimes \mathcal{C'}^{d_2}\bigoplus\mathcal{C}^{d_1} \otimes \mathcal{C'}^{d_2}_{\perp}\).

\begin{theorem}
Nonlocal entropy of any two-qubit full product basis is non-zero if and only if the full product basis is irreducible from the side who starts the protocol to distinguish.
\end{theorem}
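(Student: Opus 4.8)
The plan is to exploit the gate-decomposition fact quoted above, so that the whole optimisation collapses to the action of a single CNOT dressed by local unitaries, and then to translate the vanishing of the directional nonlocal entropy $\delta_S^{\rightarrow}$ (with $A$ designated to act as control, i.e.\ the party who initiates the probe) into a purely geometric statement about the control-side vectors. Concretely, I would first record the elementary entangling criterion for CNOT with $A$ as control: on a product state $|\psi_i^A\rangle\otimes|\psi_i^B\rangle$ with $|\psi_i^A\rangle=\alpha_0|0\rangle+\alpha_1|1\rangle$, the gate outputs $\alpha_0|0\rangle|\psi_i^B\rangle+\alpha_1|1\rangle X|\psi_i^B\rangle$, which is entangled unless $|\psi_i^A\rangle$ lies in the computational basis $\{|0\rangle,|1\rangle\}$ or $|\psi_i^B\rangle$ lies in the $X$-eigenbasis $\{|+\rangle,|-\rangle\}$. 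Because the maximisation over $U_{AB}$ is free to pre-rotate the target qubit and thereby steer every $|\psi_i^B\rangle$ away from the $X$-eigenstates, the only way the averaged output entanglement can be forced to zero is through a single control-side unitary $U_A$ that simultaneously sends all four $A$-parts into $\{|0\rangle,|1\rangle\}$. Hence $\delta_S^{\rightarrow}=0$ precisely when the four $A$-parts occupy at most two mutually orthogonal directions.

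The heart of the argument is a short structure lemma for two-qubit product bases: the four $A$-parts of any FPB take \emph{either} exactly two orthogonal directions \emph{or} exactly four distinct directions. I would prove this by orthonormality bookkeeping---for every pair $i\neq j$ one has $\langle\psi_i^A|\psi_j^A\rangle=0$ or $\langle\psi_i^B|\psi_j^B\rangle=0$---ruling out the remaining configurations in turn: a single repeated $A$-direction would demand four mutually orthogonal $B$-parts in a qubit; two non-orthogonal $A$-directions, or three distinct $A$-directions, each force some state to be $B$-orthogonal to an entire orthonormal $B$-basis; and four distinct $A$-directions are shown to carry exactly two orthogonal pairs, so the basis factorises on the complementary side. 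The two surviving cases are exactly the two branches of the theorem: two orthogonal $A$-directions correspond to an $A$-side block decomposition (reducible from $A$), while four distinct $A$-directions admit no such decomposition (irreducible from $A$, though necessarily reducible from $B$).

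Assembling these pieces gives the equivalence at once. Reducibility from $A$ furnishes the orthogonal pair into which one common $U_A$ rotates every control qubit, killing all output entanglement and yielding $\delta_S^{\rightarrow}=0$; conversely, irreducibility from $A$ (the four-distinct case) leaves a non-orthogonal pair of $A$-parts that no common $U_A$ can computationalise, so after the target-side maximisation the average entanglement is strictly positive and $\delta_S^{\rightarrow}>0$. Thus $\delta_S^{\rightarrow}\neq0$ iff the basis is irreducible from $A$, the side designated to start; the $A\leftrightarrow B$ symmetry gives the identical statement for $\delta_S^{\leftarrow}$ and $B$. Since the structure lemma also shows every two-qubit FPB is reducible from at least one side, the symmetric quantity obeys $\delta_S=0$ exactly for bases reducible from both sides (as in $\mathcal{E}^1$) and $\delta_S>0$ for those reducible from only one side (as in $\mathcal{E}^2$).

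I expect the main obstacle to be the structure lemma of the second paragraph: one must check with care that the orthonormality constraints leave no room for the intermediate $A$-direction multiplicities and that the four-distinct configuration always contains two orthogonal pairs. A secondary subtlety is justifying that the $\max_{U_{AB}}$ genuinely closes the ``$X$-eigenstate'' loophole, so that the vanishing of $\delta_S^{\rightarrow}$ is governed solely by the simultaneous alignability of the control-side vectors and not by a fortuitous choice of target states.
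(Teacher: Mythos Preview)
Your argument is correct and somewhat more systematic than the paper's own proof. The paper invokes the Walgate--Hardy canonical form $\{|0\eta_1\rangle,\,|1\eta_2\rangle,\,|0\eta_1^{\perp}\rangle,\,|1\eta_2^{\perp}\rangle\}$ directly and checks by hand that CNOT from $A$ to $B$ leaves every state product (the reducible side) while CNOT from $B$ to $A$ entangles each state whenever the $\eta_i$ are nontrivial (the irreducible side); for the converse it takes a single product state that becomes entangled under CNOT and argues that any FPB completing it must be irreducible from the control side. You instead extract the underlying content as a standalone structure lemma---the $A$-parts of any two-qubit FPB occupy either exactly two orthogonal rays or four distinct rays---and use that dichotomy to decide, uniformly, whether one common control-side rotation can send all four inputs into the computational basis. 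This buys you an explicit handling of the $\max_{U_{AB}}/\min_{U_A}$ optimisation (in particular the $X$-eigenbasis escape hatch, which you close via target-side pre-rotation), a point the paper glosses over by effectively identifying $\delta_S^{\rightarrow}$ with the action of a bare CNOT; it also makes the observation that every two-qubit FPB is reducible from at least one side an immediate corollary rather than a cited fact. The paper's route is shorter but leans on the reader accepting that the canonical form exhausts all cases; yours makes the case analysis self-contained at the cost of the extra combinatorial lemma.
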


\begin{proof}
Let us first consider \emph{if} part.  In \(\mathcal{C}^2 \otimes \mathcal{C}^2\), an ensemble having  four orthogonal states which are locally distinguishable when \(A\) starts the protocol, takes the form  \cite{WalgateHardy} as
\begin{eqnarray}
\label{eq_4product}
\mathcal{E}^4 &=& \{|\psi_1\rangle = |0 \eta_1\rangle; \, |\psi_2\rangle =|1 \eta_2\rangle; \nonumber\\
 && |\psi_3 \rangle = |0 \eta_1^{\perp}\rangle; \,  |\psi_4 \rangle = |1 \eta_2^{\perp}\rangle\},
\end{eqnarray}
where \(|\eta_i \rangle = a_i |0\rangle + b_i |1\rangle, \, i=1, 2\) and  \(|\eta_i^{\perp} \rangle = - \overline{b}_i |0\rangle + \overline{a}_i |1\rangle, \, i=1, 2\) and we assume that one of the \(a_i, b_i \neq 0\). 
In this case, when \(A\) starts the protocol, the ensemble is reducible and one can easily check that after CNOT operation from \(A\) to \(B\), entanglement production vanishes. On the other hand,  if \(B\) starts the protocol, the ensemble is irreducible provided one of the \(\{a_i, b_i\} \neq 0\). Note here that if \(b_i\)s are zero, the ensemble reduces to Case 1 discussed above, while \(a_1= b_1 = 1/\sqrt{2}, a_2=1\) gives the second case.  Let us now apply CNOT with \(B\) being the control qubit.   In this situation,  each  state in \(\mathcal{E}^4\) transforms to nonmaximally entangled state after CNOT operations and hence \(\delta_S^{\leftarrow} >0 \) when one of the \(\{a_i, b_i\}\)s are nonvanishing.

Let us now move to the \emph{only if-part}. Suppose entanglement is generated through CNOT operation on a two-qubit product state. Any product state is of the form given by, \(|\phi_1\rangle \otimes |\phi_2\rangle \equiv(a_1 |0\rangle + b_1 |1\rangle) (a_2 |0\rangle + b_2 |1\rangle), \mbox{with all}\, a_i, b_i \,\mbox{s are unequal}\). Suppose \(B\) acts as a control qubit. After CNOT,  the state becomes \(a_1 (a_2 |00\rangle + b_2|11\rangle) + b_1 (a_2 |10\rangle + b_2 |01\rangle\) which is entangled when \(a_2 \) and \(b_2\) do not vanish or  all $a_i, b_i$s are not equal. Let us consider that \(b_1=0\) and  $a_1=1$ which implies that the state reduces to \(|\psi_1\rangle\). Now we have to show that a FPB can be constructed including $|\psi_1\rangle$, which will be irreducible from \(B\)'s side. The orthogonal product vector  of \(|\psi_1\rangle\) can be either \(|\psi_2\rangle\) or \(|\psi_3\rangle\) in Eq. (\ref{eq_4product}). And if we fix to one of them, the other possibility leads to the ensemble, \(\mathcal{E}^4\), given in Eq. (\ref{eq_4product}). This set is irreducible if \(B\) starts the protocol. If none of the coefficients vanish, we can also have a FPB which after measurement by \(B\) leads to a state \(\{|\psi_1 \psi_2\rangle, |\psi_1^{\perp} \psi_2\rangle \}\) or  \(\{|\psi_1 \psi_2^{\perp}\rangle, |\psi_1^{\perp} \psi_2^{\perp}\rangle \}\) which is again irreducible if \(B\) starts the protocol  and hence the proof. 

\end{proof} 

\emph{Remark 1.} Since two-qubit  LOCC-distinguishable FPB are known \cite{WalgateHardy}, we note that they are distinguishable by  a single round of LOCC protocol.
In higher dimension, we know that there exists FPB which can be distinguished by several rounds of LOCC protocol \cite{UPB1}. Therefore, we have to modify the proof of the  theorem in higher dimension, which leads to a most general way of quantifying nonlocality of LOCC-indistinguishable set of product states, as we will show in succeeding section. 

\emph{Remark 2.}  Instead of full product basis, if  we consider the incomplete basis, the sufficient condition holds, i.e., if the states are indistinguishable even from one side, the entanglement generation is  possible as we will illustrate in higher dimensions by considering an example of UPB. 


\subsection{Higher dimensional Product states} 
\label{sec_highdimproduct}

Let us prove the necessary and sufficient condition for  FPB having arbitrary dimensional bipartite states. 

\begin{theorem}
Entanglement will be generated from the application of CNOT on a set of full product bases if and only if the full product bases contain an irreducible subspace  from the party who acts as a control. 
\end{theorem}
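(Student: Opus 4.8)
The plan is to reduce the two-sided statement to a purely algebraic characterisation of when a controlled-shift (generalised CNOT) acting with $A$ as control can entangle a product state, and then to match that characterisation with reducibility of the basis from $A$'s side. First I would fix the generalised CNOT as the controlled-shift $|j\rangle_A|k\rangle_B \mapsto |j\rangle_A |k \oplus j\rangle_B$ (with the control value acting modulo $d_2$ when $d_1 \neq d_2$) and analyse its action on a single product state $|\psi_i\rangle = |\psi_i^A\rangle \otimes |\psi_i^B\rangle$. Writing $|\psi_i^A\rangle = \sum_j c_{ij}|j\rangle$ in the control basis, the output is $\sum_j c_{ij} |j\rangle_A \otimes (X^j |\psi_i^B\rangle)_B$, which is product precisely when $|\psi_i^A\rangle$ is a single control-basis vector, and is entangled whenever the $A$-part is a nontrivial superposition over control values whose shifted targets are not all proportional. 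Because $\delta_S^{\rightarrow}$ admits a common local pre-rotation $U_A$, the relevant question is whether a single $U_A$ can carry every $A$-part to a control-basis vector; such a $U_A$ exists if and only if the distinct $A$-parts $\{|\psi_i^A\rangle\}$ form a mutually orthogonal set.

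Next I would prove the key lemma: for a full product basis the distinct $A$-parts are mutually orthogonal if and only if the basis is fully reducible from $A$, i.e. it contains no nontrivial irreducible subspace (with $A$-dimension $\geq 2$) from the control side. The easy direction is that orthogonal $A$-parts can be rotated to the computational basis, after which the basis splits completely according to the value of $A$ into blocks $\{|j\rangle\}\otimes\mathcal{C}^{d_2}$; each such block has one-dimensional $A$-support, admits no further $A$-side splitting, and generates no entanglement under the control CNOT. For the converse I would argue by contraposition: if the distinct $A$-parts are not all orthogonal, then no orthogonal decomposition $\mathcal{C'}^{d_1}\oplus\mathcal{C'}^{d_1}_{\perp}$ of the control space separates the states into blocks all the way down, so the recursive reduction from $A$ must terminate at some subspace $\mathcal{C'}^{d_1}\otimes\mathcal{C}^{d_2}$ with $\dim\mathcal{C'}^{d_1}\geq 2$ whose sub-basis is irreducible from $A$; within that block at least two $A$-parts are non-orthogonal and cannot be simultaneously diagonalised.

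With the lemma in hand the theorem follows in both directions. For the \emph{only if} part (contrapositive), absence of an irreducible-from-$A$ subspace gives full reducibility, hence mutually orthogonal $A$-parts, hence a $U_A$ diagonalising all of them, so the controlled CNOT leaves every state product and $\delta_S^{\rightarrow}=0$. For the \emph{if} part, a nontrivial irreducible-from-$A$ subspace forces at least two non-orthogonal $A$-parts in that block, so for every common $U_A$ some state has an $A$-part that is a genuine superposition over control values; choosing the control and target bases of the CNOT (the maximisation over $U_{AB}$) so that the shifted target vectors are not proportional then produces a strictly entangled output for that state, whence $\sum_i p_i E(|\psi_i'\rangle) > 0$ for every $U_A$ and therefore $\delta_S^{\rightarrow}>0$.

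The main obstacle I anticipate is the converse half of the key lemma together with its book-keeping: showing that non-orthogonality of the $A$-parts genuinely obstructs the recursive $A$-side reduction and thereby yields an honest $\geq 2$-dimensional irreducible block, rather than being absorbed by some higher-round splitting. A subsidiary technical point is the unequal-dimension case, where the controlled-shift must be defined carefully and one must check that the maximisation over admissible CNOTs can always avoid the measure-zero coincidences (proportional shifted targets) that would spuriously keep a superposed-$A$-part state product; dealing with these accidental degeneracies, and confirming that the common pre-rotation $U_A$ rather than a per-state one is the operationally correct reading of the definition, is where the argument needs the most care.
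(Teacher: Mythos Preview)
Your plan is sound and rests on the same core mechanism as the paper's proof: CNOT with $A$ as control entangles a product state exactly when the $A$-factor is a nontrivial superposition in the control basis, and the presence of non-orthogonal $A$-factors in the ensemble is what obstructs reducibility from $A$'s side. The paper, however, does not isolate your key lemma explicitly. It argues the \emph{if} direction by exhibiting a single pair of states inside the irreducible block with non-orthogonal $A$-parts (written as $|\eta_1\rangle|0\rangle$, $|\eta_2\rangle|1\rangle$) and checking directly that the generalised CNOT entangles them; for the \emph{only if} direction it starts from a generic product state that becomes entangled under CNOT, reads off coefficient constraints, and then asserts rather informally that any full product basis containing such a state must, at some round of the $A$-side protocol, leave states whose $A$-parts are equal or non-orthogonal. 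Your route via the single equivalence ``all distinct $A$-parts mutually orthogonal $\Leftrightarrow$ fully reducible from $A$'' packages both directions uniformly by contraposition and makes transparent exactly the step the paper leaves sketchy---the converse half of that lemma, which you correctly flag as the main obstacle. Your additional concerns (the unequal-dimension controlled shift, accidental product outputs when the target happens to be a shift-eigenvector, and whether the $U_A$ in the definition of $\delta_S^{\rightarrow}$ is common or per-state) are genuine technical points the paper's proof does not touch; its definition is ambiguous on the last issue, and your common-$U_A$ reading is the one under which the if-and-only-if actually holds.
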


\begin{proof}
Suppose the LOCC protocol is started by one of the parties, say, \(A\). If she finds an irreducible subspace in FPB,  it means that \(A\) cannot design any further measurements which can divide the rest of the states into two blocks. It then immediately implies that there exists states in that subspace which is nonorthogonal at \(A\)'s side. Therefore, without loss of generality, we assume that the ensemble consists of states of the form \(\{|\psi_1 \rangle = |\eta_1\rangle |0\rangle; |\psi_2 \rangle = |\eta_2\rangle |1\rangle\}\) which is orthogonal from \(B\)'s part and where \(|\eta_i\rangle = \sum _{j=0}^{d_1-1} a_{ij} |i\rangle\).   In \(\mathcal{C}^{d_1} \otimes \mathcal{C}^{d_2}\),  the generalised CNOT operation can be defined as 
\begin{eqnarray}
|i,i\rangle\rightarrow |i,i\oplus_{d_2} i\rangle,
\end{eqnarray}
where \(\oplus_{d_2}\) represents the addition modulo \(d_2\). Applying the CNOT gate on  \(|\psi_i\rangle\)s where \(A\) is the control qubit, we immediately find that entanglement of the output state is nonvanishing.

Let us now concentrate on \emph{only if-part}.  We first notice that  in \(\mathcal{C}^{d_1} \otimes \mathcal{C}^{d_2}\),  full product basis should contain \(d_1 d_2\) number of  orthogonal product states and hence there will be  \(d_1\) states which are orthogonal from \(A\)'s side while the rests are orthogonal from \(B\)'s side. Now similar to Theorem 1, any product state in two-qudits is of the form \(\sum_{i=0}^{d_1-1} a_i|i\rangle \otimes \sum_{i=0}^{d_2-1} b_i|i\rangle\) with all \(a_i\) and \(b_i\)s being unequal.  It can generate entanglement after CNOT operations with \(A\) as control qubit   when all \(b_i\)s except one vanishes, thereby reduces to a state \(|\psi_1\rangle\), where we assume that \(b_1\neq 0\) or some of the \(b_i\) s  vanish or none of them vanish.  In all these situations, from the construction of FPB mentioned above, we can conclude that there exists a round of LOCC protocol where either the state at \(A\)'s side are all same or are nonorthogonal. In both these situations, there will be an irreducible subspace in the constructed FPB from \(A\)'s side and hence the proof. 
\end{proof}

\begin{corollary} If a set of product states is indistinguishable by LOCC, it is surely irreducible \cite{Rinaldis} and hence for locally indistinguishable FPB, entanglement production by CNOT operations always occurs.
\end{corollary}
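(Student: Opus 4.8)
The plan is to derive the statement as a direct consequence of Theorem 2 together with the operational characterization of irreducibility from Ref.~\cite{Rinaldis}. The corollary bundles two linked implications: first, that LOCC-indistinguishability forces the presence of an irreducible subspace, and second, that such irreducibility guarantees entanglement generation under CNOT. The second implication is precisely the \emph{if} direction of Theorem 2, so the only genuine work lies in establishing the first.

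For the first implication I would argue by contraposition, showing that a fully reducible set always admits at least one nontrivial step of LOCC progress. Suppose the FPB is reducible from, say, \(A\)'s side, so that it splits into two blocks lying in \(\mathcal{C'}^{d_1}\otimes\mathcal{C}^{d_2}\) and \(\mathcal{C'}^{d_1}_{\perp}\otimes\mathcal{C}^{d_2}\). Then \(A\) can perform the projective measurement onto these two mutually orthogonal subspaces; since every state of the ensemble lies entirely within one block, this measurement is nondisturbing and deterministically sorts the states into two groups while preserving orthogonality within each group. Iterating this reduction on each resulting sub-ensemble builds a finite LOCC tree that terminates in singletons, which establishes local distinguishability. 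Contrapositively, if the set is LOCC-indistinguishable then at some round of the protocol, from whichever party is designated to act, one must encounter a sub-ensemble that admits no such splitting, i.e.\ an irreducible subspace. This is exactly the content attributed to Ref.~\cite{Rinaldis}.

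Having located an irreducible subspace from the side of the party who acts as control, I would then invoke Theorem 2 verbatim: since the FPB contains an irreducible subspace from the control party, the CNOT applied with that party as control yields a nonvanishing output entanglement, so \(\delta_S>0\). Composing the two steps gives the claim that entanglement production by CNOT always occurs for locally indistinguishable FPB.

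The main obstacle I anticipate is the multi-round bookkeeping in the first implication. The irreducible obstruction need not appear in the original ensemble but may surface only after several rounds of measurement, and one must check that the post-measurement sub-ensembles remain orthogonal product bases of the relevant reduced subspaces, so that the inductive reduction argument, and hence the applicability of Theorem 2 to the obstructing subspace, is legitimate at every level. Handling the alternation of control and target roles across rounds, and ensuring that \emph{irreducible from the control side} is the correct hypothesis to feed into Theorem 2 regardless of which party happens to act at the obstructing round, is the delicate point; everything else reduces to the already-established theorem.
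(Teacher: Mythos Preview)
Your proposal is correct and matches the paper's approach: the paper gives no explicit proof for this corollary, treating it as an immediate consequence of Theorem 2 together with the result cited from Ref.~\cite{Rinaldis} that LOCC-indistinguishable product bases are irreducible. Your contrapositive sketch of the Rinaldis implication and your subsequent appeal to Theorem 2 simply make explicit what the paper leaves implicit, and the multi-round bookkeeping concern you flag is real but is likewise not addressed in the paper.
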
 

To illustrate  Theorem 2, let us now move on to the examples of product states which are indistinguishable with local operations and more than one round or infinite round of classical communication. We have chosen two examples -- one in   \(\mathcal{C}^{3} \otimes \mathcal{C}^{2}\) and another one in  \(\mathcal{C}^{3} \otimes \mathcal{C}^{3}\).  

\emph{Case 1.} In \(\mathcal{C}^2\otimes \mathcal{C}^d\), all FPB are LOCC-distinguishable. Let us consider a basis in \(\mathcal{C}^{3} \otimes \mathcal{C}^{2}\), consists of \(\{\frac{1}{\sqrt{2}}(|1\rangle \pm |2\rangle) |0\rangle, |11\rangle, |21\rangle, |00\rangle, |01\rangle\}\).  The states are distinguishable when \(B\) starts the protocol and also reducible. However, if \(A\) starts the protocol by projecting the states in \(|0\rangle\) and the subspace spanned by \(|1\rangle\) and \(|2\rangle\), then in the latter occasion, the resulting ensemble of first four states becomes irreducible. Therefore, \(A\) can not complete the task of distinguishing perfectly unless \(B\) helps and it is straightforward to see that average entanglement generated from the ensemble, if Alice acts as a control, is \(\delta_S^{\rightarrow} = 1/3>0\).  

\emph{Case 2: Nonlocality without entanglement \cite{nlwe}.} We now consider the famous example of full basis in \(\mathcal{C}^{3} \otimes \mathcal{C}^{3}\) , given by
\begin{eqnarray}
\label{eq_nlwe}
&&|\psi_1\rangle = |1\rangle |1\rangle; |\psi_2\rangle = |0\rangle |0+1\rangle;\nonumber\\
&&|\psi_3\rangle = |0\rangle |0-1\rangle; |\psi_4\rangle = |2\rangle |1+2\rangle;\nonumber\\
&& |\psi_5\rangle = |2\rangle |1-2\rangle; |\psi_6\rangle = |1+2\rangle |0\rangle\nonumber\\
&&|\psi_7\rangle = |1-2\rangle |0\rangle; |\psi_8\rangle = |0+1\rangle |2\rangle\nonumber\\
&& |\psi_9\rangle = |0-1\rangle |2\rangle,
\end{eqnarray} 
where \(|i+j\rangle = \frac{1}{\sqrt{2}}(|i\rangle + |j\rangle)\).  In case of product states, the second term in Eq. (\ref{eq_prodright}) always vanishes and nonavnishing value of the first term guarantees the states to be LOCC-indistinguishable. In this case, after CNOT  gate, we find that the last four states only contribute and therefore, we have
\begin{eqnarray}
\delta_S^{\rightarrow} =4/9 . 
\end{eqnarray}
It can be easily checked that the similar value can be obtained for \(\delta_S^{\leftarrow}\) and therefore,  nonlocal entropy is strictly positive.

Upto now, we considered the ensembles which are the full basis.  Let us discuss how the idea of characterising local distinguishability via entanglement creation through CNOT can be applied to the set of incomplete product basis. 
\begin{corollary}
For incomplete product basis, if the set can be  completed only by the product states and the full product basis is reducible in any round of the LOCC protocol, the entanglement generation is not possible. 
\end{corollary}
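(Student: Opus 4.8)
The plan is to derive the statement directly from Theorem~2 by exploiting the fact that the nonlocal entropy of Eq.~(\ref{eq_prodright}) is evaluated state by state. First I would use the completability hypothesis to embed the incomplete set \(\{|\psi_i\rangle\}_{i=1}^{k}\) into a full product basis \(\mathcal{B}=\{|\psi_i\rangle\}_{i=1}^{d_1 d_2}\) whose first \(k\) vectors are the original states, the remaining \(d_1 d_2-k\) being the completing product states guaranteed by the assumption. The second hypothesis, that this completion stays reducible in every round of the LOCC protocol, is precisely the statement that \(\mathcal{B}\) harbours no irreducible subspace from the side acting as control, so it is the negation of the condition appearing in Theorem~2.

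Next I would apply the contrapositive of Theorem~2 to \(\mathcal{B}\): since \(\mathcal{B}\) contains no irreducible subspace, the generalised CNOT produces no entanglement on it, i.e.\ \(\delta_S^{\rightarrow}(\mathcal{B})=0\). In Eq.~(\ref{eq_prodright}) the local term \(S(\rho_i^A)\) vanishes for product states and every summand \(p_i\,E(|\psi_i'\rangle^{AB})\) is non-negative; hence a vanishing total forces \(E(|\psi_i'\rangle^{AB})=0\) for each index \(i\). Thus CNOT maps every member of \(\mathcal{B}\) back to a product state.

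The decisive observation is that the entanglement generated on a fixed state by CNOT is an intrinsic property of that state alone and is insensitive to the remaining members of the ensemble. Consequently the original vectors \(|\psi_i\rangle\), \(i=1,\dots,k\), being a subset of \(\mathcal{B}\), are also sent to product states, so every term of the corresponding nonlocal-entropy sum is zero and \(\delta_S^{\rightarrow}=0\) for the incomplete set. Repeating the argument with control and target interchanged gives \(\delta_S^{\leftarrow}=0\), and therefore \(\delta_S=0\) by Eq.~(\ref{eq_prodsymm}), which is exactly the claim that no entanglement can be generated.

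The principal obstacle I expect is the bookkeeping that bridges the incomplete set and Theorem~2, which is stated only for full bases: I must confirm that ``reducible in any round'' for the chosen completion coincides exactly with the hypothesis ``no irreducible subspace from the control side'' of Theorem~2, and that the completing product states do not covertly create an irreducible block that would invalidate \(\delta_S^{\rightarrow}(\mathcal{B})=0\). Once the completion is genuinely reducible at every stage, the restriction-to-a-subset step is automatic, since the summands are non-negative and state-local; in particular the probabilities \(p_i\) of the incomplete ensemble play no role, as each generated entanglement is identically zero.
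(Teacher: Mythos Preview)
Your proposal is correct and follows essentially the same route as the paper: apply (the contrapositive of) Theorem~2 to the product completion, and then restrict to the original subset. The paper's proof is two sentences---it invokes Theorem~2 on the completed FPB and then asserts that the subset inherits the conclusion---whereas you make explicit the mechanism behind that inheritance, namely that the nonlocal-entropy sum in Eq.~(\ref{eq_prodright}) has non-negative, state-local summands, so a vanishing total forces every term to vanish. Your caveat about matching ``reducible in any round'' with ``no irreducible subspace from the control side'' is a fair bookkeeping point, but this identification is exactly how the paper uses the term throughout, so no additional work is needed there.
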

\begin{proof}
In Theorem 2, we prove that if the FPB is reducible via LOCC,  the corresponding ensemble cannot create entanglement. If the FPB is reducible, surely the subset is also reducible and hence cannot create entanglement. 
\end{proof}

Let us now consider the uncompletable product states which are indistinguishable by LOCC and hence they are irreducible. The above Corollary cannot shed light on them. \\
\emph{Case 3: Unextendible product bases \cite{UPB1, UPB2}. } If we relax the FPB to a set of orthogonal product states forming incomplete set, the nonlocal entropy may increase even  for the set which  is LOCC distinguishable. For example, the ensemble \(\{|00\rangle, |10\rangle, |+1\rangle\}\), can produce positive entanglement entropy. 
If the set is not extendible i.e., UPB,  the nonlocal entropy can be shown to be always positive. 
Let us consider the incomplete product  ensemble in two qutrits,  given by
\begin{eqnarray}
|\psi_1\rangle &= & \frac{1}{\sqrt{2}} |0\rangle ( |0\rangle - |1\rangle);  \, \, |\psi_2\rangle = \frac{1}{\sqrt{2}}  |2\rangle ( |1\rangle - |2\rangle) \nonumber \\
|\psi_3\rangle &= &\frac{1}{\sqrt{2}} (|0\rangle - |1\rangle) |2\rangle; \, \, |\psi_4\rangle =\frac{1}{\sqrt{2}} (|1\rangle - |2\rangle) |0\rangle; \nonumber \\
|\psi_5\rangle &=&\frac{1}{3} (|0\rangle + |1\rangle + |2\rangle)( (|0\rangle + |1\rangle + |2\rangle).
\label{eq_UPB}
\end{eqnarray}
After CNOT operations, we find \(\delta_S^{\rightarrow} = \delta_S^{\leftarrow} =(2+\log_2 3)/5 \).

We find that all these sets can produce entanglement via CNOT operations, thereby showing nonlocal nature of the ensembles. In this respect, let us state the following theorem. 

\begin{theorem}
For any locally indistinguishable product basis, nonlocal entropy is strictly positive.
\end{theorem}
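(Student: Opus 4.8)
The plan is to reduce the statement to the two situations already settled and then to close the only genuinely new gap, namely incomplete bases. First I would record the basic simplification: every member of a product ensemble is a pure product state, so $S(\rho_i^A)=0$ and the inner minimisation over $U_A$ is vacuous, since entanglement is invariant under local unitaries. Consequently, for product ensembles the quantity in Eq. (\ref{eq_prodright}) collapses to the CNOT-optimised average output entanglement, $\delta_S^{\rightarrow}=\max_{U_{AB}}\sum_i p_i\,E(|\psi_i'\rangle^{AB})$, and likewise for $\delta_S^{\leftarrow}$. Both of these are nonnegative, so by Eq. (\ref{eq_prodsymm}) it is enough to exhibit one direction of the CNOT and one state $|\psi_i\rangle$ with $p_i>0$ whose image is entangled in order to conclude $\delta_S>0$.

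For a full product basis the conclusion is immediate. Corollary 1 asserts that a locally indistinguishable FPB is irreducible, and Theorem 2 then guarantees that the CNOT taken from the control side facing the irreducible subspace produces a nonvanishing output entanglement, so that $\delta_S^{\rightarrow}>0$ and hence $\delta_S>0$. The only content of Theorem 3 beyond what is already proven therefore lies in the incomplete case, on which I would concentrate.

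For an incomplete locally indistinguishable set I would run the same reducibility mechanism used in the \emph{if} directions of Theorems 1 and 2, but directly on the orthogonal set itself rather than on a completion, which is essential for uncompletable sets such as a UPB. Execute any LOCC protocol; since two orthogonal states are always locally distinguishable \cite{WHSV}, indistinguishability forces the protocol to halt at a sub-ensemble that neither party can split by an orthogonality-preserving measurement, i.e. a block that is irreducible from both sides. Taking $A$ as control, irreducibility from $A$'s side means $A$'s reduced components cannot be sorted into mutually orthogonal groups, so there exist two surviving states $|\eta_1\rangle_A|\phi_1\rangle_B$ and $|\eta_2\rangle_A|\phi_2\rangle_B$ with $\langle\eta_1|\eta_2\rangle\neq 0$; being globally orthogonal they are orthogonal on $B$, and nonorthogonality together with distinctness on $A$ forces at least one $|\eta_i\rangle$ to be a nontrivial superposition. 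Choosing the CNOT basis through the local dressing absorbed in $\max_{U_{AB}}$ so that this superposition is spread across the control register, the generalised CNOT sends $|\eta_i\rangle_A|\phi_i\rangle_B$ to a state with mixed $A$-marginal, hence $\delta_S^{\rightarrow}>0$. This is precisely the sufficient condition flagged in Remark 2 and already verified numerically on the UPB of Eq. (\ref{eq_UPB}).

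The step I expect to be the main obstacle is the last one for incomplete sets: ensuring that the control-side superposition yields a genuinely entangled output rather than being neutralised by a shift-invariant target component, the qubit analogue being a $|+\rangle$ target, on which CNOT acts trivially. I would handle this by exploiting the remaining freedom in $\max_{U_{AB}}$ to dress the target register with a local unitary that destroys any accidental shift-invariance while preserving the control superposition, so that the output $A$-marginal is strictly mixed and $E(|\psi_i'\rangle)>0$. A secondary subtlety is that irreducibility and Theorem 2 are phrased for full bases, so for uncompletable sets I would argue from the halting/irreducibility structure of the orthogonal set as above, avoiding any appeal to a product completion that need not exist. Combining the immediate FPB conclusion with this incomplete-set argument yields $\delta_S>0$ in all cases.
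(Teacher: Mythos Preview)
Your argument is correct and shares the paper's core mechanism: locate two states in the ensemble whose $A$-factors are nonorthogonal (forced by local indistinguishability/irreducibility), choose local bases so that the $B$-factor is a computational basis vector and the $A$-factor is a genuine superposition, apply the generalised CNOT with $A$ as control, and read off a mixed $A$-marginal. The paper carries this out in one stroke for complete and incomplete bases alike, writing the two witness states as $|\eta_1\rangle|\phi\rangle$, $|\eta_2\rangle|\phi^{\perp}\rangle$ and invoking the same ``WLOG'' basis freedom that you make explicit through the local dressing inside $\max_{U_{AB}}$; it then also asserts $\delta_S^{\leftarrow}>0$ by symmetry of the indistinguishable structure. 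Your route differs only organisationally: you dispatch the FPB case by citing Corollary~1 and Theorem~2, and reserve the direct irreducibility argument for incomplete (possibly uncompletable) sets, while also observing---more economically than the paper---that nonnegativity of both $\delta_S^{\rightarrow}$ and $\delta_S^{\leftarrow}$ means positivity of a single direction already yields $\delta_S>0$. Your explicit treatment of the shift-invariant-target obstruction (the $|+\rangle$ phenomenon) is a point the paper leaves implicit in its ``without loss of generality'' choice $|\phi\rangle=|d-1\rangle$; your handling via a target-side local unitary is equivalent and arguably cleaner.
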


\begin{proof}
If a product orthogonal ensemble cannot be distinguished locally, there always exist atleast a pair of two or more states in the ensemble whose one of the subsystems are nonorthogonal with other pair, i.e. if \(\{|\psi^A_i\rangle |\psi^B_i\rangle\}_{i=1}^{k}\) is locally indistinguishable, then it should consist of two states which without loss of generality, can be taken of the form 
\begin{eqnarray}
|\psi_1 \rangle = |\eta_1\rangle |\phi\rangle; |\psi_2 \rangle = |\eta_2\rangle |\phi^{\perp}\rangle,
\end{eqnarray} 
where  \(\eta_1 = \sum_{i=0}^{d_1-1} a_i |i\rangle\) and \( |\eta_2\rangle = \sum_{i=0}^{d_1-1} b_i|i\rangle\) and say, \(|\phi\rangle = |d-1\rangle \) and \( |\phi^{\perp} = |0\rangle\) are two orthogonal states. Here, we assume that during CNOT operation, \(A\) acts as a control. If we operate CNOT on these states, the resulting state can be written as
\begin{eqnarray}
|\psi_i \rangle \rightarrow \sum_{i=0}^{d_1-1} a_i |i, i\oplus_{d_2} i\rangle; |\psi_j \rangle\rightarrow \sum_{i=1}^{d_1-1} b_i |i,i \rangle,
\end{eqnarray} 
whose local density matrices have nonvanishing entropy, thereby creating \(\delta_S^{\rightarrow}>0\). The locally indistinguishable ensemble also consists of states where the \(A\)-part interchenages with  part-\(B\), and therefore, it also  gives    \(\delta_S^{\leftarrow}>0\), thereby showing \(\delta_S\) to be nonvanishing for any locally indistinguishable ensembles in \(\mathcal{C}^{d_1} \otimes \mathcal{C}^{d_2}\). Above characetristics is true both for complete as well as incomplete basis and hence it is true both for NLWE basis as well as UPB. Notice, moreover, that with the increase of dimension, number of such states increases and hence the amount of \(\delta_S\) also increases (eg.  see \cite{ChenLi04}). 
\end{proof}

We end this section by making a comment on how behaviour of ensemble under certain transformation enables quantification of nonlocality associated with the set corresponding to  a protocol that distinguishes them. Since, in general, a LOCC protocol includes multiple rounds of classical communication by both the parties, we propose a weighted nonlocal entropy as the proper quantifier. If at some round,  states which are not eliminated by previous measurements, become irreducible from both sides, only then we put non-zero weight to the amount of entanglement generated from those irreducible states via CNOT operation, otherwise we put zero-weight to it.

As our treatment of merging two kinds of nonlocality provides a way  of quantification, let us investigate the transformation  which  enables us to quantify ensembles containing also entangled state as its member.

\section{Identifying hierarchy  in sets having entangled state(s)}

We will now show that ensembles, which contain atleast one entangled state, behave differently than the set of product states when considered under certain global transformations. Towards quantifying nonlocality associated with such a set, we modify the methodology accordingly, which can be thought of a  "reverse method" than the one used for product states. It was already known that LOCC-indistinguishability is not direct connected with the entanglement content of the set of states, as shown by the examples of  local distinguihsability of two orthogonal states  \cite{WHSV} as well as by  "more nonlocality with less entanglement" \cite{morenlwe}.

Apart from characterising the sets, the proposed figure of merit provides an explanation for the phenomena of "more nonlocality with less entanglement" \cite{morenlwe}, which was missing in the literature. To the best of our knowledge, till now, there exists only a single natural quantifier of such ensemble of states which invokes the gap between accessible information extractable by local and global means \cite{measureQCset}.  

 Before presenting the modified figure of merit, let us illustrate the effect of CNOT on ensembles of two-qubit maximally entangled basis, \(\{ |\phi^{\pm}\rangle =\frac{1}{\sqrt{2}} (|0 0\rangle \pm  |11\rangle);  |\psi^{\pm}\rangle = \frac{1}{\sqrt{2}}( |0 1\rangle \pm  |10 \rangle)\}\). If CNOT acts from \(A\) to \(B\),  all the states becomes product, given by \(\{\frac{1}{\sqrt{2}}  (| 0\rangle \pm  |1\rangle)|0\rangle;   \frac{1}{\sqrt{2}}( |0 \rangle \pm  |1\rangle)|1\rangle\}\). The fact of disentangling entangled state through entangling operation is also noticed very recently and employed to define absolute entangled set \cite{aentset}.  
 
 Motivated by the action of CNOT on entangled states, we come up with the following quantity:  
For a given ensemble to be distinguished, \(\{p_i, |\psi_i^{AB}\rangle\}_{i=1}^{k}\), we consider the gap between the local entropy of the average initial state and that of transformed average ensemble by means of \emph{global} unitary operations. In general, it can be written as
\begin{eqnarray}
\Delta_S^{\rightarrow} = \max [\max_{U} \big( S(\overline{\rho}^A) - S(\overline{\rho'}^A), S(\overline{\rho}^B) - S(\overline{\rho'}^B)\big) ],
\label{eq_entright}
\end{eqnarray}
where \(\overline{\rho}^i,\, i= A, B \) is the reduced density matrix of the average initial state \(\overline{\rho}^{AB} = \sum_i p_i  |\psi_i^{AB}\rangle \langle \psi_i^{AB}|\) while  \(\overline{\rho'}^{AB} = \sum_i p_i  U|\psi_i^{AB}\rangle \langle \psi_i^{AB}|U^\dagger\), with \(\overline{\rho'}^i,\, i= A, B \)  being the local subsystems. One has to first maximize over all global unitaries, so that the each output state becomes as less entangled as possible. If all the states become product,  \(\overline{\rho'}^{AB}\) is separable and finally maximum has to be taken between two quantities obtained from subsystems. "\(\rightarrow\)" in the superscript indicates that if the unitary is CNOT, \(A\) acts as a control qubit while \(B\) is the target. 

If the states are locally distinguishable, the reduction in the amount of local entropy of the ensemble state can be made maximum via global unitary operations, i.e., it will be possible to have  \(\max[S(\overline{\rho}^A), S(\overline{\rho}^B)]\) where \(S(\overline{\rho'}^A)\) or \(S(\overline{\rho'}^B) \) is vanishing. 
On the other hand, for locally indistinguishable set of states, such a maximum local entropy reduction does not occur with the application of CNOT. It is to be noticed that unlike in the case of pure product bases, here we consider the reduced density matrix of the average state  of the ensemble which is a mixed state. Therefore, the entropy of  the local density matrix is not a faithful measure of entanglement. However,  the quantifier can still be interpreted from a thermodynamic perspective i.e.,  the maximal amount of  work which can be extracted by local parties from the average state and is given by
\begin{eqnarray}
\Delta_S^{\rightarrow} = \max [\max_{U} \big(  \overline{W}^A_{fin} - \overline{W}^A_{in}, \overline{W}^B_{fin} - \overline{W}^B_{in}  \big)],
\label{eq_entWork}
\end{eqnarray}
where \(\overline{W}_{in} = \log_2 d^i -  S(\overline{\rho}^i)\) and \(\overline{W}_{fin} = \log_2 d^i -  S(\overline{\rho'}^i)\) with \(i = A, B\). 

Similarly, one can define \(\Delta_S^{\leftarrow}\) when \(B\) is the control. We expect to have the same value for the ensembles whose  LOCC (in)distinguishability  do not depend on the party who starts the protocol. 
To obtain a symmetric quantity, we again introduce 
\begin{eqnarray}
\Delta_S = \frac{1}{2}(\Delta_S^{\rightarrow}  + \Delta_S^{\leftarrow}),
\label{eq_symmetric}
\end{eqnarray}
which  reaches its maximum value if  sets of states are LOCC distinguishable from both sides.

Before presenting the general result, let us consider an ensemble consisting of  any two maximally entangled states, given by
\begin{eqnarray}
E^{1}_{\max} = \{ |\phi{\pm}\rangle = \frac{1}{\sqrt{2}}(|0 0\rangle \pm  |11\rangle)\}. 
\end{eqnarray}
The local entropy of the average initial state is \(1\) while after CNOT operation, \( S(\overline{\rho}^{'A}) =1\) and \( S(\overline{\rho}^{'B}) =0\). Therefore, the maximum is \(1\) which is the maximal reduction in extractable work by the use of a known single qubit state. Note that if \(|\phi^-\rangle \) is replaced by \(|\psi^-\rangle\), one has to perform first single qubit unitary operation, say \(\sigma_z\) followed by CNOT operation which yields the same maximum value. In the case of $\{\phi^+,\phi^-,\psi^-\}$ and for a set having all the two-qubit maximally entangled  states, we obtain $\Delta_S^{\rightarrow}$  to be $0.0817$ and zero respectively.  Notice that the last two ensembles are LOCC-indistinguishable \cite{Ghosh'01} while \(E^1_{\max}\) can be distinguished by LOCC.

\emph{Result 1.} Let us now consider an ensemble consisting of two othergonal states \cite{WHSV}. We know that any two orthogonal states are always locally distinguishable \cite{WHSV} if \(A\) starts the protocol, and are given by 
\begin{equation}
E^2 = \{ |\psi_1\rangle = |0 \eta_1\rangle +  |1 \eta_2\rangle, \,\,  |\psi_2\rangle =  |0 \eta_1^{\perp}\rangle +  |1 \eta_2^{\perp} \rangle\},
\end{equation}
where \(|\eta_i \rangle = a_i |0\rangle + b_i |1\rangle, \, i=1, 2\) are arbitrary  single-qubit states and  \(|\eta_i^{\perp} \rangle = - \overline{b}_i |0\rangle + \overline{a}_i |1\rangle, \, i=1, 2\)  are the corresponding orthogonal ones. Without loss of generality, we assume that \(a_i \geq b_i\).  The initial  local entropies of the average state are \(S(\overline{\rho}^A)\) and \(S(\overline{\rho}^B)\).
Let us now apply CNOT operations on \(|\psi_i\rangle\)s where qubit at \(A\)'s side acts as a control qubit while the target qubit is the qubit at \(B\)'s side, and we get 
\begin{equation}
E^{'2} = \{|\psi_1\rangle = |0 \eta_1\rangle +  |1 \eta'_2\rangle; \,\, |\psi_2\rangle =  |0 \eta_1^{\perp}\rangle +  |1 \eta_2^{'\perp} \rangle\},
\end{equation}
where  \(|\eta'_2\rangle = a_2  |1\rangle + b_2 |0\rangle\) and  \(|\eta_2^{'\perp} \rangle = - \overline{b}_2 |1\rangle + \overline{a}_2 |0\rangle\). The corresponding final local entropies of the average state are  \(S(\overline{\rho^{\prime}}^A) \) and \(S(\overline{\rho^{\prime}}^B) \). Similarly, one can compute \(S(\overline{\rho^{\prime}}^A) \) and \(S(\overline{\rho^{\prime}}^B) \) when the role at \(A\) and \(B\) with respect to CNOT operation gets reversed. For example, when $a_1=\frac{4}{5}, a_2=\frac{3}{4}$, we obtain $\Delta_S^{\rightarrow}=0.0007$ while, $\Delta_S^{\leftarrow}=0$.
We find $\Delta_S^{\rightarrow}>\Delta_S^{\leftarrow}$ for all nonvanishing \(a_i\) and \(b_i\)s. Interestingly, we note that the set is distinguishable when left party starts the protocol but not the other way around.

\emph{ Result 2.}
In \(\mathcal{C}^{2} \otimes \mathcal{C}^{2}\),  a full basis containing non-maximally entangled states, given by
\begin{eqnarray}
E^3 =\{|\psi_1\rangle =a |00\rangle + b |11\rangle; \nonumber\\
|\psi_2 \rangle = - b |00\rangle +a |11\rangle; \nonumber\\
 |\psi_3 \rangle = a |01\rangle + b |10\rangle; \nonumber\\
 |\psi_4 \rangle = -b |01\rangle +a |10\rangle\},
\end{eqnarray}
with \(a\) and \(b\) being real are known to be  indistinguishable by LOCC \cite{Ghosh'01}. 
Following similar procedure as above, we find \(\Delta_S^{\rightarrow}\) is vanishing for the full set while unity for sets containing any two states from the set. Interestingly, if we consider first three states,  the value remains strictly in between as expected. For clarity,  we provide the expression for that case, $\Delta_S^{\rightarrow}=\frac{1}{3}[2-(2-b^2) \log_2 (2-b^2)-(1+b^2) \log_2 (1+b^2)]$. Similar result can also be obtained for complex \(a\) and \(b\).

\emph{Result 3.}
 Let us now discuss the ensemble which demonstrates more nonlocality with less entanglement containing states in  \(\mathcal{C}^{3} \otimes \mathcal{C}^{3}\). To explain it, we first consider a set of maximally entangled states, given by
  \cite{morenlwe}
\begin{eqnarray}
E_{3\otimes 3} = &&\{|\psi_1\rangle = \frac{1}{\sqrt{3}}(|00\rangle + \omega |11\rangle + \omega^2 |22\rangle); \nonumber \\
&&|\psi_2 \rangle = \frac{1}{\sqrt{3}}(|00\rangle + \omega^2 |11\rangle + \omega |22\rangle); \nonumber \\
&& |\psi_3 \rangle = \frac{1}{\sqrt{3}}(|01\rangle +  |12\rangle + |20\rangle) \},
\label{eq_morenonlocal}
\end{eqnarray}
whereas the second set contains first two states from the  set with last one being replaced by $|01\rangle$. It was shown that the first set is LOCC-distinguishable while the second is not \cite{morenlwe}.
It can be easily checked that by operating CNOT twice, we can disentangle all the states in \(E_{3\otimes 3}\) as well as the second set. 
It turns out that for the first set, \(\Delta_S^{\rightarrow}\) is  $\log_2 3$  
while  for the second set, it is less than \(\log_2 3\), specifically, $1.43552$, thereby revealing their difference with respect to LOCC-discrimination task. 
Apart from the LOCC protocol which differentiates these two sets, the above observation reveals the physical explanation of the phenomenon.

 It is to be mentioned here that if we increase the dimension of the systems, to maximize \(\Delta_S\), we have to operate CNOT  several times, preceded by a few local unitary operators, which overall constitute a global unitary operation, as shown before.

\subsection{Ensembles with maximally entangled states in two-qudits}

We end this section considering an extension of our result to a set of maximally entangled states in \(\mathcal{C}^{d} \otimes \mathcal{C}^{d}\).

\begin{theorem}
In two-qudits, any \(d\) orthogonal maximally entangled states written in a canonical form maximizes \(\Delta_S\) while for \(d^2\) states, it vanishes. \(\Delta_S\) for the ensemble consisting of \(d+1\) to \(d^2-1\) maximally entangled states chosen in that form lies in between maximum and minimum. 
\end{theorem}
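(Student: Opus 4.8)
The plan is to pass to the generalized Bell basis and reuse the disentangling property of CNOT already exploited in the examples of this section. Write the \(d^{2}\) canonical maximally entangled states as \(|\Phi_{m,n}\rangle=\tfrac{1}{\sqrt{d}}\sum_{j=0}^{d-1}\omega^{jm}|j\rangle|j\oplus_{d}n\rangle\) with \(\omega=e^{2\pi i/d}\) and \(m,n\in\{0,\dots,d-1\}\). Because each \(|\Phi_{m,n}\rangle\) has maximally mixed marginals, the average \(\bar\rho^{AB}\) of \emph{any} subensemble satisfies \(S(\bar\rho^{A})=S(\bar\rho^{B})=\log_{2}d\); hence the initial local entropy already equals the largest value it can take and every reduction in Eq.~(\ref{eq_entright}) is bounded above by \(\log_{2}d\). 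The first step is the elementary check that the generalized CNOT with \(A\) as control disentangles \(|\Phi_{m,n}\rangle\) into \(|f_{m}\rangle_{A}\otimes|n\rangle_{B}\) (up to a local unitary on \(B\)), where \(|f_{m}\rangle=\tfrac{1}{\sqrt{d}}\sum_{j}\omega^{jm}|j\rangle\) is a Fourier vector. Consequently, for a subset \(S\) of size \(k\) the transformed average state is the classically correlated \(\bar\rho'^{AB}=\tfrac{1}{k}\sum_{(m,n)\in S}|f_{m}\rangle\langle f_{m}|\otimes|n\rangle\langle n|\), whose marginal entropies reduce to the Shannon entropies \(S(\bar\rho'^{A})=H(\{c_{m}/k\})\) and \(S(\bar\rho'^{B})=H(\{r_{n}/k\})\) of the column and row occupation numbers \(c_{m}=|\{n:(m,n)\in S\}|\) and \(r_{n}=|\{m:(m,n)\in S\}|\).

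I would next settle the two extremes. For \(k=d\) I take the canonical family sharing one shift index, \(\{|\Phi_{m,0}\rangle\}_{m=0}^{d-1}\); then \(r_{0}=d\) and the other \(r_{n}\) vanish, so \(S(\bar\rho'^{B})=0\) and the reduction equals \(\log_{2}d\), which saturates the upper bound and is therefore maximal, while the reversed CNOT makes the \(A\)-marginal pure, giving \(\Delta_{S}=\log_{2}d\). For \(k=d^{2}\) the subensemble is a complete orthonormal basis, so \(\bar\rho^{AB}=I/d^{2}\) is invariant under every global unitary; thus \(S(\bar\rho'^{A})=S(\bar\rho'^{B})=\log_{2}d\) for all \(U\) and \(\Delta_{S}=0\).

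For the intermediate range \(d<k<d^{2}\) I would establish the two strict inequalities separately. The upper bound \(\Delta_{S}<\log_{2}d\) rests on a rank argument valid for \emph{every} global unitary, not only CNOT: \(\bar\rho'^{AB}=U\bar\rho^{AB}U^{\dagger}\) has rank \(k\), and if a marginal were pure the global state would factor as \(|\chi\rangle\langle\chi|\otimes\sigma\) with \(\mathrm{rank}(\sigma)\le d<k\), a contradiction; hence \(S(\bar\rho'^{A}),S(\bar\rho'^{B})>0\) for all \(U\), and since the unitary group is compact and the entropy continuous, the infimum is attained and strictly positive, forcing \(\Delta_{S}^{\rightarrow},\Delta_{S}^{\leftarrow}<\log_{2}d\). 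For the lower bound \(\Delta_{S}>0\) I would exhibit a single admissible unitary that unbalances a marginal: filling the \(k\)-dimensional support row by row into the product basis \(\{|f_{m}\rangle_{A}|n\rangle_{B}\}\) (a CNOT in the stated direction followed by local permutations) makes the occupation numbers non-uniform---either strictly fewer than \(d\) rows are occupied when \(d\mid k\) since \(k/d<d\), or the final partial row unbalances the counts when \(d\nmid k\)---so \(H(\{c_{m}/k\})<\log_{2}d\) and the reduction is positive, placing \(\Delta_{S}\) strictly between \(0\) and \(\log_{2}d\).

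The main obstacle I anticipate is the lower bound rather than the upper one: I must guarantee that the unbalancing unitary lies in the CNOT-with-fixed-direction class used to define \(\Delta_{S}^{\rightarrow}\) and \(\Delta_{S}^{\leftarrow}\), and that positivity survives for \emph{every} admissible choice of \(k\) states in canonical form, including the ``balanced'' subsets (with \(d\mid k\) and uniform row and column counts) for which the plain canonical CNOT returns zero reduction; for those I would precede the CNOT by local unitaries that reshuffle the Bell indices into an unbalanced configuration. The symmetry of the maximally entangled states under exchange of \(A\) and \(B\) makes \(\Delta_{S}^{\rightarrow}\) and \(\Delta_{S}^{\leftarrow}\) coincide here, so the symmetric quantity inherits the same bounds, and the compactness step is what upgrades the pointwise strict positivity of the entropy into a strict separation from \(\log_{2}d\).
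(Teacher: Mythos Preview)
Your route is essentially the paper's: pass to the canonical Bell family, disentangle with (iterated) CNOT so that \(|\Phi_{m,n}\rangle\mapsto|f_m\rangle\otimes|n\rangle\), and then read off the marginal entropies of the averaged state block by block. Where you differ is in rigor on the intermediate range. The paper simply asserts that after bringing \(d\) of the \(d+k\) states into one block by local unitaries and applying CNOT, the surviving \(B\)-marginal is ``neither pure nor maximally mixed''; you replace this assertion by two genuine arguments---the rank obstruction (a pure marginal would force \(\mathrm{rank}\,\bar\rho'^{AB}\le d<k\), valid for \emph{every} global unitary, with compactness upgrading pointwise positivity to a strict gap) and the explicit row-filling unitary for the lower bound. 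Your treatment of \(k=d^2\) via unitary invariance of \(I/d^2\) is likewise cleaner than the paper's CNOT-specific check. These additions buy you actual strict inequalities rather than heuristic ones.

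There is one small gap relative to the theorem as stated. For \(k=d\) you only exhibit the single-block family \(\{|\Phi_{m,0}\rangle\}_m\), whereas the claim is that \emph{any} \(d\) canonical maximally entangled states maximize \(\Delta_S\). The paper covers the general case by first applying local unitaries to move an arbitrary \(d\)-tuple into a common \(m\)-block and then running the same CNOT argument. Your own row-filling idea already fixes this if you apply it at \(k=d\): any \(d\)-dimensional support can be unitarily rotated into \(\mathcal{H}_A\otimes|0\rangle_B\), making \(\bar\rho'^{B}\) pure and saturating \(\log_2 d\). Also, your worry that the unbalancing unitary must lie in a restricted ``CNOT class'' is not an obstacle here, since \(\Delta_S^{\rightarrow}\) in Eq.~(\ref{eq_entright}) is defined as a maximum over all global unitaries; any isometry onto a row-filled product subspace is admissible.
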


\begin{proof}
In  \(\mathcal{C}^{d} \otimes \mathcal{C}^{d}\), any full orthogonal maximally entangled basis in a canonical form shared by \(A\) and \(B\) can be written as 
\begin{eqnarray}
\label{eq_cano}
|\Psi_{lm}\rangle = \frac{1}{\sqrt{d}} \sum_{k=0}^{d-1} \exp(\frac{2 \pi i l k}{d}) |k\rangle \otimes |k \oplus_d m\rangle,
\end{eqnarray}
where \(l, m = 0,  1, \ldots d-1\). For a given \(m\) values, there are \(d\) pairwise orthogonal maximally entangled states, lying in the same subspace and  we refer them as a block. Hence, the  full bases of maximally entangled state of the form, given above, consists of \(d\) such blocks. By applying CNOT operation in Eq. (\ref{eq_cano}) from \(A\)  to \(B\) \(d-1\) times, it can be found that the reduced state at \(B\)'s side obtained from each state in the \(k\)-th block becomes  \(|k\rangle,\) where, \( k=0,1,\ldots d-1\), which implies that the average reduced state is a pure state. 
 Hence if \(d\) number of states are chosen from the same block for the purpose of LOCC-discrimination task, then \(S(\overline{\rho^{\prime}}^{B}) =0\), thereby maximizing \(\Delta_S = \log_2 d\). On the other hand, if \(d\) states are chosen from different blocks, we have to transfer them to a same block by local unitary transformations first and then applies CNOT operation which again maximizes \(\Delta_S\). 

If  the full basis i.e. \(d^2\) states are given,  \(\rho'^{A}\) as well as \(\rho'^{B}\) are maximally mixed after CNOT operation which is same as the initial reduced density matrix of the average state, thereby \(\Delta_S\) vanishes for this ensemble. 

Let us now consider an ensemble consisting of  \(d+k\) orthogonal maximally entangled states.    By local unitary  operations,  \(d\) states can be transferred to a same block having the same value of \(m\) while the rest \(k\) states remain in other blocks. After  applying CNOT  to all the states in the ensemble  \(d-1\) times,  \(\rho'^{B(A)}\) is neither a pure state nor a maximally mixed state. However, the initial average local entropy in this situation still remains \(\log_2 d\) and hence \(\Delta_S\) for these ensembles lies between maximum and minimum.

\end{proof}

\section{Connection between non-local entropy with locally accessible information}

In quantum information theory, Holevo bound \cite{Holevo}, denoted by \(\chi\), is an important quantity which provides an upper bound on accessible information,  \(I_{acc}^{gl}\),  obtained by global measurements on an ensemble \(\{p_i, \rho_i\}\). Mathematically,  \(I_{acc}^{gl} \leq \chi \equiv S(\overline{\rho}) - \sum_i p_i S(\rho_i)\). Here \(\overline{\rho} = \sum_i p_i \rho_i\) and \(\chi\) is known as the Holevo quantity.  
On the other hand, if the measurements are restricted to local operations and classical communication, it is obvious that the information gained from the ensembles consisting of bipartite states would be less than the Holevo quantity. Evaluating accessible information exactly in q general scenario is a very hard task. However,
it was found that locally accessible information \(I_{acc}^{LOCC}\),  from an ensemble  \(\{p_i, \rho_i^{AB}\}\), is upper bounded by
\(  S(\overline{\rho}^A) + S(\overline{\rho}^B) - \max_{A,B}\sum_i p_i S(\rho_i^{A/B}) \) 
where \(\overline{\rho}^x\) is the local density matrix of the average ensemble state \cite{locallyaccess}.

 In the case of ensemble of product states, \(S(\rho_i^{A/B}) =0\) and hence the above mentioned upper bound does not provide a better bound for LOCC-indistinguishable product bases. Moreover, note that for LOCC-distinguishable sets of product states, locally accessible information coincides with that of accessible information by global means. Let us see whether our proposed quantifier can provide some bound on accessible information in a meaningful way.
 In the case of full product bases, if they are indistinguishable, we have shown that after the application of CNOT on the ensemble, FPB transforms to an ensemble which contains atleast a single entangled state. In case of full basis, it was shown \cite{morenlwe} that the sets are LOCC-indistinguishable if one (two) of the state is entangled. And hence, after CNOT operation, the resulting ensemble is surely LOCC indistinguishable and hence we can obtain a lower bound on locally accessible information for these ensemble, i.e., 
\(I_{acc}^{LOCC} ( \{p_i, |\psi_i^A \psi_i^B\rangle \}) \geq I_{acc}^{LOCC} ( \{p_i, U|\psi_i^A \psi_i^B\rangle \}) \).

 If the ensembles consist of entangled states, the opposite happens, and we get an upper bound, i.e., \(I_{acc}^{LOCC} ( \{p_i, |\psi^{AB}\rangle \}) \leq I_{acc}^{LOCC} ( \{p_i, U|\psi^{AB}\rangle \}) \leq  S(\overline{\rho^{\prime}}^A) + S(\overline{\rho^{\prime}}^B) -\max_{A,B} \sum_i p_i S(\rho_i^{\prime A/B})\), where the notations have usual meaning. Note that the second bound is effective when the set  retains some of its entangled states even after application of CNOT operation (eg. the set $E^2$ in previous subsection).

\section{Conclusion}
\label{sec_conclu} 

In any physical theory, how objects change under certain transformations, has utmost significance in benchmarking the theory. We apply this idea to connect two notions of nonlocality in quantum theory. In the first case, there exist certain correlations emerging from incompatible measurements on entangled systems, which can not be described by local realistic models --known as Bell-nonlocality. On the other hand, there are product states which are locally immeasurable, which is known as nonlocality without entanglement. The second type of nonlocality refers to the property of an ensemble while the first kind is for a particular composite state. Here we applied a global unitary transformation on the set of full product bases and showed that the resulting states behave  according to their  discrimination  protocol under local operations and classical communication (LOCC). In particular, we proposed a physical quantity, the generation of average entanglement from the ensemble under the considered transformation. The figure of merit, dubbed here as nonlocal entropy, tuned out to quantify faithfully the amount of nonlocality associated with the set. It vanishes for LOCC-distinguishable sets and becomes maximum for LOCC-indistinguishable sets of states which are indistinguishable irrespective of the parties, starting the protocol and even with the help of infinite rounds of classical communication.
 We also characterised successfully incomplete product bases including unextendible product bases with respect to the proposed quantifier.

We also showed that a modified quantifier under global unitary transformation can reveal the local (in)distinguishability properties of an ensemble consisting of at least one entangled state. 
 We demonstrated it by considering a set of nonmaximally entangled states in two-qubits as well as maximally entangled states in a canonical form in arbitrary dimension. Interestingly, 
 a satisfactory explanation for the phenomena of more nonlocality with less entanglement can be revealed via this quantification.

\section*{Acknowledgement}
We acknowledge the support from Interdisciplinary Cyber Physical Systems (ICPS) program of the Department of Science and Technology (DST), India, Grant No.: DST/ICPS/QuST/Theme- 1/2019/23.

\end{document}